\author{Kenya Ueno}
\authorrunning{Kenya Ueno}
\institute{The Hakubi Center for Advanced Research, and\\
Graduate School of Informatics\\
Kyoto University\\
\email{kenya@i.kyoto-u.ac.jp}}
\titlerunning{}
\title{Exact Algorithms for 0-1 Integer Programs\\with Linear Equality Constraints}
\begin{document}
\maketitle
\begin{abstract}
In this paper, we show $O(1.415^n)$-time and $O(1.190^n)$-space exact algorithms for 0-1 integer programs where constraints are linear equalities and coefficients are arbitrary real numbers. Our algorithms are quadratically faster than exhaustive search and almost quadratically faster than an algorithm for an inequality version of the problem by Impagliazzo, Lovett, Paturi and Schneider (arXiv:1401.5512), which motivated our work. Rather than improving the time and space complexity, we advance to a simple direction as inclusion of many NP-hard problems in terms of exact exponential algorithms. Specifically, we extend our algorithms to linear optimization problems.
\end{abstract}

\section{Introduction}
\paragraph{The Feasibility Problem:}
The existence of integer solutions for a certain system of equations has been discussed 
as one of the fundamental problems in the theory of computation. 
A prominent example is the Hilbert 10th problem on Diophantine equations~\cite{Matiyasevich93}.

In this paper, we study the feasibility problem of 0-1 integer programs whose constraints are only linear equalities as follows:
\begin{problem}[Feasibility of 0-1 Integer Programs with Linear Equalities] \ 
\label{feasibility}
\begin{center}
Find $x \in \{0,1\}^n$ which satisfies a given set of linear equalities $A x = b$.
\end{center}
\end{problem}

We give an exact algorithm running in $O(1.415^n)$-time and $O(1.190^n)$-space, which achieves a quadratic speedup compared to exhaustive search running in $O(2^n)$-time.
Our algorithm can store the data of all the feasible solutions in $O(1.415^n)$-space, even if the number of solutions is more than $O(1.415^n)$.

As a similar problem, which achieves quadratic speedup,
there is a quantum algorithm known as Grover's algorithm for unstructured database search problems~\cite{Grover97}.
It gives a correct answer with high probability, but our algorithm do not use randomness and always gives a correct answer.
Recently, probabilistic polynomial algorithms solving a system of linear equations has been discussed by Raghavendra~\cite{Raghavendra12} and Fliege~\cite{Fliege12}.
If we eliminate the 0-1 constraints, we can give a polynomial time algorithm by the Gaussian elimination.

\paragraph{The Optimization Problem:}
Then, we extend our algorithm for the following standard optimization problem running in $O(1.415^n)$-time and $O(1.190^n)$-space:
\begin{problem}[Optimization of 0-1 Integer Programs with Linear Equalities]
\label{optimization}
\begin{eqnarray*}
\label{IP_primal}
\begin{array}{ll}
\min & \displaystyle c^T x \\
s.t. & \displaystyle A x = b,\\
& x \in \{0,1\}^n. 
\end{array}
\end{eqnarray*}
\end{problem}

We know that there are many sophisticated ideas (e.g., the branch-and-bound method and the cutting-plane method) improving algorithms 
and implementations for computing 0-1 integer programs~\cite{GG11,Wolsey98}. 
However, we don't know any improvements of worst-case time complexity for such a general setting in which elements of $A$ and $b$ can be arbitrary real numbers.

\paragraph{Exact Algorithms for NP-hard Problems:}
Since there are no polynomial time algorithms for NP-hard problems unless P=NP,
many researchers have studied exact exponential time algorithms which are faster than exhaustive search for NP-hard problems~\cite{FK10,IP01,IPZ01,Woeginger03}.
Integer programs include many NP-hard problems as special cases.
For instance, the subset sum problem is a special case of Problem~\ref{feasibility} in which the number of constraints is exactly one.

Among several such problems whose exact algorithms have been studied,
some problems (e.g., the subset sum problem~\cite{HS74,SS81}) have the same time complexity as Problem~\ref{feasibility}, and
some other problems (e.g., the exact satisfiability problem~\cite{BMS05} and the exact hitting set problem~\cite{DP02}) have algorithms faster than $O(1.415^n)$-time.
In particular, the exact satisfiability problem, which is also a special case of Problem~\ref{feasibility}, has been intensively studied~\cite{BMS05,DP02}.

On the other hand, it seems to be difficult to improve the time complexity of our algorithms due to a similar reason of NP-hardness.
In other words, if we can improve our algorithms, then we simultaneously improve the time complexity of exact algorithms
for many NP-hard problems which can be reduced to Problem~\ref{feasibility}.

\paragraph{Circuit Lower Bounds from Moderately Exponential Algorithms:}
Very recently, Impagliazzo, Lovett, Paturi and Schneider~\cite{ILPS14} studied the feasibility problem for the inequality version of 0-1 integer programs stated as follows:
\begin{problem}[Feasibility of 0-1 Integer Programs with Linear Inequalities] \ 
\label{inequality}
\begin{center}
Find $x \in \{0,1\}^n$ which satisfies a given set of linear inequalities $A x \geq b$.
\end{center}
\end{problem}

Impagliazzo, Lovett, Paturi and Schneider~\cite{ILPS14} gave an algorithm solving Problem~\ref{inequality} in $O(2^{(1 - \mathrm{poly}(1/d))n})$-time where $dn$ is the number of constraints.
It improves an algorithm for Problem~\ref{inequality} by Impagliazzo, Paturi and Schneider~\cite{IPS13}, which is faster than 
$O(2^n)$-time only when the number of inequalities is smaller than $0.136 n$.
These results are motivated from the challenge initiated by Williams~\cite{Williams13}
for proving lower bounds for certain circuit models.
In this context, it is important to give only a modest improvement of the exponential factor from the $O(2^n)$-time exhaustive search.

\paragraph{Our Algorithms:}
Our algorithms are built on a simple combination of basic techniques on exact algorithms for NP-hard problems. 
In particular, we use a classic technique called the $k$-table method studied in \cite{HS74,SS81}.
This method splits $n$-variables into the $k$ sets of $n/k$-variables, 
and lists all possible $2^{n/k}$-assignments for each set.
This preprocessing enables us to give algorithms which run faster than $O(2^n)$-time for certain problems.

On the other hand, it was unclear how we construct the $k$-table 
for exact algorithms to compute 0-1 integer programs with linear equalities.
The ideas introduced in the two papers~\cite{ILPS14,IPS13} give us inspiration 
to overcome technical problems including analysis to bound the time complexity.
In this paper, we introduce a notion of the vector equality problem, a variation of 
the vector domination problem studied by Impagliazzo, Paturi and Schneider~\cite{IPS13}, to construct the 2-table for our problems.
In Section~\ref{vector_section}, we show two algorithms solving the vector equality problem and give analysis of its time complexity.

In Section~\ref{IP_section}, we describe how we compute the feasibility problem and the optimization problem for 0-1 integer programs with linear equalities by reducing them to the vector equality problem.
A novel point here is an extension of the feasibility problem to the optimization problem by using the 2-table method.
This is achieved by post-processing after solving the feasibility problem with extra storage of the objective function.
There are no extra blow-up of the exponential time complexity.
In Section~\ref{space_section}, we improve the space complexity of our algorithms to $O(1.190^n)$ following the idea of Shroeppel and Shamir~\cite{SS81}.

\section{Notation and Definition}
Throughout the paper, we use the following notations.
We denote $m \times n$ constant matrices by $A$, and $i,j$-th element of a matrix $A$ by $A_{i,j}$.
We use $b$ and $c$ as constant vectors. $c^T$ is the transpose of $c$.
We also use $x$, $u$ and $v$ as variable vectors.
We denote $j$-th element of a vector $x$ by $x_j$.
The same notation applies for other constant and variable vectors.

The function $\mathrm{poly}(n)$ is some polynomial for $n$.
Following the convention in the theory of exact algorithms, we measure the time complexity by the function of $n$, which is the number of variables.
We assume $m \in O(\mathrm{poly}(n))$ since otherwise the input size is super-polynomial to $n$.

In this paper, we will give algorithms for 0-1 integer programs with linear equality constraints by reducing them to following problem.
\begin{definition}[Vector Equality]
Two vectors $u=(u_1,u_2,\cdots,u_m)$ and $v=(v_1,v_2,\cdots,v_m)$ are equal (i.e., $u=v$) if and only if $u_i=v_i$ for any $i~(0 \leq i \leq m)$.
Given two sets of $m$-dimensional vectors $U$ and $V$, the {\it vector equality problem} is 
a problem to output information (e.g., a list of subsets of $U$ and $V$) of all the pairs of two vectors $u \in U$ and $v \in V$ such that $u = v$.
\end{definition}
Note that elements of vectors can be real numbers. 
Therefore, we cannot combine a set of elements into one element unlike the case of integers whose absolute values are bounded~\cite{CD99}.

We will use the lexicographical order to compare two $m$-dimensional vectors.
\begin{definition}[Lexicographical Order]
A vector $u = (u_1,u_2,\cdots,u_m)$ is larger than another vector $v = (v_1,v_2,\cdots,v_m)$ (i.e., $u > v$) if and only if there is an index $l~(0 \leq l \leq m)$ such that
$u_l > v_l$ and $u_i=v_i$ for any $i~(0 \leq i < l)$.
\end{definition}

\section{Algorithms for the Vector Equality Problem}
\label{vector_section}
\subsection{Overview and Comparison of Two Algorithms}
In this section, we present two algorithms (Algorithm~\ref{sort_algo} and Algorithm~\ref{recursive_algo}) 
to solve the vector equality problem efficiently.
Algorithm~\ref{sort_algo} is simple and uses a sort routine by lexicographical order between two $m$-dimensional vectors, 
while Algorithm~\ref{recursive_algo} is recursive and uses the idea of measure-and-conquer~\cite{FK10}.

Although its theoretical time complexity is the same as the other one, 
Algorithm~\ref{recursive_algo} has a practical merit when $m$ (the number of constraints) is large.
Moreover, we can see the quadratic difference between 
the equality and inequality versions of integer programs by looking at the recursive algorithm
compared to the algorithms for Problem~\ref{inequality}~\cite{ILPS14,IPS13}.

To analyze the time complexity of Algorithm~\ref{recursive_algo}, 
we need to incorporate an idea using the weighted median, 
which is introduced very recently by Impagliazzo, Lovett, Paturi and Schneider~\cite{ILPS14}.
Furthermore, we complete our analysis of the time complexity by setting a suitable choice of complexity measure, which is a novel point of this paper, for the search space in the 2-table method.

Algorithm~\ref{recursive_algo} can be regarded as a vector version of the quicksort algorithm 
with the weighted median as a pivot. 
Actually, it is not necessary to use the weighted median in practice to select the pivot.
A heuristical choice of the pivot may be faster in many cases, 
while there are no certificates of its worst-case time complexity.
After the completion of the first draft including Algorithm~\ref{recursive_algo},
we noticed that it can be simplified as Algorithm~\ref{sort_algo}.
However, we consider Algorithm~\ref{recursive_algo} is still beneficial to 
present due to several reasons as mentioned above.

\subsection{A Simple Algorithm by Sorting}
\begin{algorithm}[ht]
\caption{SortVectorEquality$(U,V)$}
\begin{algorithmic}
\REQUIRE Two sets of $m$-dimensional vectors
\ENSURE A list of two sets of $m$-dimensional vectors
\STATE Sort $U$ and $V$ in the ascending lexicographical order, respectively.\\
($u^k$ and $v^k$ denote the $k$-th vectors in $U$ and $V$, respectively)
\STATE Initialize two indices $\alpha = 1$ and $\beta = 1$ for $U$ and $V$.
\WHILE{$\alpha \leq |U|$ \AND $\beta \leq |V|$}
\IF{$u^\alpha > v^\beta$}
\STATE Increment $\beta$.
\ELSIF{$u^\alpha < v^\beta$}
\STATE Increment $\alpha$.
\ELSE
\STATE Set $\alpha' := \alpha$, $\beta' := \beta$ and $w := u^\alpha$ ($:= v^\beta$).
\WHILE{$w = u^\alpha$}
\STATE Increment $\alpha$.
\ENDWHILE
\WHILE{$w = v^\beta$}
\STATE Increment $\beta$.
\ENDWHILE
\STATE Output $(\alpha',\alpha-1)$ and $(\beta',\beta-1)$ as representation of two subsets of $U$ and $V$.
\ENDIF
\ENDWHILE
\end{algorithmic}
\label{sort_algo}
\end{algorithm}

We describe a simple algorithm by sorting for solving the vector equality problem in Algorithm~\ref{sort_algo}.
Since the sort of $m$-dimensional vectors can be done in $O(m N \log N)$-time, we can also run Algorithm~\ref{sort_algo} in $O(m N \log N)$-time.
\begin{lemma}
\label{vector_lemma1}
The vector equality problem can be computed in $O(m N \log N)$-time where $|U| = |V| = N$ by Algorithm~\ref{sort_algo}.
\end{lemma}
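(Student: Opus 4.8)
The plan is to bound the two phases of Algorithm~\ref{sort_algo} separately: the initial sorting of $U$ and $V$, and the subsequent merge-style scan governed by the pointers $\alpha$ and $\beta$. Since the dominant cost turns out to be the sorting, the whole argument reduces to checking that the scan contributes only a lower-order term, and that the algorithm indeed reports exactly the equal pairs.

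First I would handle the sort. Comparing two $m$-dimensional vectors in the lexicographical order of the preceding definition costs $O(m)$ time, since in the worst case all $m$ coordinates must be examined before a strict inequality or full equality is decided. Any comparison-based sorting routine performs $O(N \log N)$ comparisons on $N$ items, so sorting each of $U$ and $V$ takes $O(mN \log N)$ time, and both together remain $O(mN \log N)$.

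Next I would bound the scan by an amortized argument. The key observation is that $\alpha$ and $\beta$ are monotonically non-decreasing and each capped by $N$. In the two strict-inequality branches exactly one pointer is incremented, while in the equality branch the two inner \textbf{while} loops advance $\alpha$ past the maximal block of copies of $w$ in $U$ and $\beta$ past the corresponding block in $V$, each at least once. Hence every comparison performed anywhere in the loop can be charged to an increment of $\alpha$ or of $\beta$; since each pointer advances at most $N$ times in total, the scan performs $O(N)$ comparisons at $O(m)$ each, for $O(mN)$ overall. I would also note that the output is at most one pair of index ranges per distinct matching value, hence $O(N)$ compact records, which does not dominate. For correctness I would argue that discarding is safe: because both lists are sorted ascending, when $u^\alpha > v^\beta$ no later $u$ can equal $v^\beta$, and symmetrically, so advancing the smaller pointer never skips a matching pair, and the block-scan in the equality branch pairs each maximal run of equal vectors in $U$ with its counterpart in $V$.

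Summing the two phases gives $O(mN \log N) + O(mN) = O(mN \log N)$, as claimed. I expect no genuine obstacle here; the only mildly delicate point is the amortized accounting of the nested inner loops in the equality branch, which must be charged against the global monotone advance of the pointers rather than bounded per outer iteration. The rest is a standard two-pointer merge analysis.
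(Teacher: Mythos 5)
Your proof is correct and takes essentially the same approach as the paper: sort both sets of $m$-dimensional vectors in $O(mN\log N)$ time and observe that the subsequent two-pointer scan is only linear, since every $O(m)$-cost comparison can be charged to a monotone advance of $\alpha$ or $\beta$. The paper in fact justifies Lemma~\ref{vector_lemma1} with a single sentence (the sort dominates), so your amortized accounting of the merge phase and the argument that discarding never skips a matching pair merely fill in details the paper leaves implicit.
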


Our algorithm can enumerate all the possible solutions. 
It may sound strange that we can store the data of all possible solutions within $O(m N \log N)$-space, even if the number of all possible solutions is $\omega(m N \log N)$.
This is just because we store the data as a collection of two sets of elements.

If the number of solutions is bounded by $O(m N \log N)$, then the time complexity to enumerate all the solutions is also $O(m N \log N)$.
If the number of solutions is $\omega(m N \log N)$, then the time complexity depends on the number of possible solutions.

\subsection{A Recursive Algorithm by Measure-and-Conquer}
Another way to solve the vector equality problem is to use 
a notion of the weighted median to bound the time complexity of our algorithms
in the way of measure-and-conquer.

\begin{definition}[Weighted Median]
The weighted median for a set of weighted numbers is a number such that both the total weight of numbers smaller than the weighted median 
and the total weight of numbers larger than the weighted median are at most half of the total weight of all the numbers.
\end{definition}
\begin{algorithm}[ht]
\caption{RecursiveVectorEquality$(U,V,i,m)$} \label{recursive_algo}
\begin{algorithmic}
\REQUIRE Two sets of $m$-dimensional vectors and an index $i$ and the dimension $m$.
\ENSURE A list of two sets of $m$-dimensional vectors.
\IF{$U = \emptyset$ or $V = \emptyset$}
\RETURN an empty list
\ELSIF{$i > m$}
\RETURN a singleton list of $(U,V)$
\ELSE
\STATE 
\begin{enumerate}
\item Find the weighted median $k$ of the $i$-th coordinates of $U \cup V$ with weight $|V|$ and $|U|$ for each element in $U$ and  $V$, respectively.
\item Partition $U$ into three sets:
\begin{enumerate}
\item $U^+ = \{ u \mid u_i > k \}$,
\item $U^= = \{ u \mid u_i = k \}$,
\item $U^- = \{ u \mid u_i < k \}$.
\end{enumerate}
\item
Partition $V$ into three sets: 
\begin{enumerate}
\item $V^+ = \{ u \mid v_i > k \}$,
\item $V^= = \{ u \mid v_i = k \}$,
\item $V^- = \{ u \mid v_i < k \}$.
\end{enumerate}
\item
Solve the following three subproblems:
\begin{enumerate}
\item L1 = VectorEquality$(U^+,V^+,i,m)$
\item L2 = VectorEquality$(U^=,V^=,i+1,m)$
\item L3 = VectorEquality$(U^-,V^-,i,m)$
\end{enumerate}
\end{enumerate}
\RETURN the concatenation of the three lists L1, L2, and L3
\ENDIF
\end{algorithmic}
\end{algorithm}

Then, we consider a recursive algorithm (Algorithm~\ref{recursive_algo}) computing the vector equality problem.
Following a linear time algorithm for the unweighted median problem~\cite{Blum72},
we can give a linear time algorithm for the weighted median problem~\cite{BO83}, which is also indicated in \cite{ILPS14}.
\begin{lemma}[\cite{BO83,ILPS14}]
The weighted median of $N$ numbers can be computed in $O(N)$-time. 
\end{lemma}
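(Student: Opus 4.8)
The plan is to mimic the classical deterministic linear-time selection algorithm of Blum et al.~\cite{Blum72}, turning it into a prune-and-search procedure that accounts for weights. Write $W = \sum_i w_i$ for the total weight of the $N$ input numbers (in the application of Algorithm~\ref{recursive_algo} the weight attached to an element is $|V|$ if it comes from $U$ and $|U|$ if it comes from $V$; in the abstract statement these are arbitrary positive weights). By the definition of weighted median, I seek a value $p$ for which the numbers strictly below $p$ carry total weight at most $W/2$ and the numbers strictly above $p$ carry total weight at most $W/2$. I would maintain, across recursive calls on a shrinking multiset $S$ of candidates, two running quantities $w_L$ and $w_U$ recording the total weight of elements already committed to lie strictly below, respectively strictly above, every element still in $S$; initially $S$ is the whole input and $w_L = w_U = 0$.

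Each recursive step proceeds as follows. First compute the ordinary (unweighted) median $p$ of $S$ in $O(|S|)$ time via~\cite{Blum72}, and partition $S$ in one linear pass into the parts $S_<$, $S_=$, $S_>$ of elements below, equal to, and above $p$, while accumulating the three partial weights $W_<$, $W_=$, $W_>$. Then test the weighted-median condition in the global picture: if $w_L + W_< \le W/2$ and $w_U + W_> \le W/2$, then $p$ itself is a weighted median and I return it. Otherwise exactly one of the two inequalities fails; if $w_L + W_< > W/2$ I recurse on $S_<$ after setting $w_U \leftarrow w_U + W_= + W_>$, and symmetrically if $w_U + W_> > W/2$ I recurse on $S_>$ after setting $w_L \leftarrow w_L + W_< + W_=$. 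The invariant that $w_L$ (resp. $w_U$) equals the total weight of all discarded elements lying below (resp. above) the surviving set is thereby preserved, so the test always refers to the correct global weights.

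The running-time analysis is immediate once the pivot is the \emph{exact} median: both $S_<$ and $S_>$ contain at most $\lfloor |S|/2 \rfloor$ elements, so each recursive call is on a set of at most half the size, while the work per call---median finding, partitioning, and summing the three weights---is linear. This yields $T(N) \le T(N/2) + O(N) = O(N)$ by the usual geometric-series argument, giving the claimed bound.

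I expect the only delicate part to be the bookkeeping at the boundary: correctly handling the block $S_=$ of elements tied with the pivot, choosing strict versus non-strict inequalities so that the definition of weighted median is met exactly, and verifying that the procedure always terminates with a valid answer (in particular that the two failure cases are mutually exclusive and that one inequality must eventually hold once $S$ collapses to the critical element). These are routine invariant-maintenance checks rather than genuine difficulties; the core idea---using the unweighted median as a halving pivot and carrying the pruned weight in two accumulators---is what drives the linear bound.
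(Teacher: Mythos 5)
Your proof is correct: the prune-and-search reduction (unweighted linear-time median as pivot, weight accumulators $w_L, w_U$, recursion on at most half the elements, giving $T(N) \le T(N/2) + O(N) = O(N)$) is sound, and the boundary cases you flag (mutual exclusivity of the two failure cases, termination when $|S|=1$) do check out. The paper itself offers no proof of this lemma --- it simply cites \cite{BO83,ILPS14} and remarks that the result follows the linear-time unweighted median algorithm of \cite{Blum72} --- and your argument is precisely that standard reduction, so you have in effect supplied the proof the paper points to.
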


We analyze the time complexity of Algorithm~\ref{recursive_algo} for the vector equality problem in the following lemma.
\begin{lemma}
\label{vector_lemma2}
The vector equality problem can be computed in $O(m N \log N)$-time where $|U| = |V| = N$ by starting Algorithm~\ref{recursive_algo} at RecursiveVectorEquality$(U,V,1,m)$.
\end{lemma}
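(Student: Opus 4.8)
The plan is to analyze the recursion tree of RecursiveVectorEquality by choosing the \emph{product} $|U|\cdot|V|$ as the complexity measure and showing that this measure forces every root-to-leaf path to be short. First I would dispatch correctness: since $u=v$ implies $u_i=v_i$ for every coordinate, any matching pair is routed into the same one of the three partitions at every recursive call, so it is never separated and is eventually emitted together at the base case $i>m$; conversely a non-matching pair is split apart at the first coordinate where the two vectors differ. Hence the algorithm outputs exactly the matching pairs, and it remains to bound the running time.

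The heart of the argument is the effect of the weighted median $k$. With weight $|V|$ on each element of $U$ and weight $|U|$ on each element of $V$, the total weight is $2|U||V|$, so the defining property of $k$ gives $|U^+|\cdot|V| + |V^+|\cdot|U| \le |U||V|$, that is, $|U^+|/|U| + |V^+|/|V| \le 1$. By the AM--GM inequality this yields $|U^+|\cdot|V^+| \le \tfrac14|U||V|$, and symmetrically $|U^-|\cdot|V^-| \le \tfrac14|U||V|$. Thus each of the two ``same-index'' recursive calls, on $(U^+,V^+,i,m)$ and on $(U^-,V^-,i,m)$, reduces the product measure by a factor of at least four, whereas the ``advance'' call on $(U^=,V^=,i+1,m)$ increments $i$ and can only decrease the product.

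Next I would bound the depth of the recursion tree along any root-to-leaf path. The index is incremented at most $m$ times, giving at most $m$ advance steps. For the same-index steps, the product measure is at most $N^2$ at the root, is non-increasing, is at least $1$ at every node on which we recurse (as both sets must be nonempty there), and drops by a factor of at least four at each such step; comparing the product at successive same-index nodes therefore forces at most $O(\log N)$ of them. Consequently every path has length $O(m+\log N)$.

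Finally I would total the work. Computing the weighted median and performing the partition at a node with sets $U,V$ costs $O(|U|+|V|)$, provided we inspect only the $i$-th coordinate of each element and move pointers rather than copy whole vectors. Because each element of the original $U$ and $V$ descends along a single root-to-leaf path, it appears in at most $O(m+\log N)$ nodes; summing $O(|U|+|V|)$ over all nodes is exactly summing path lengths over elements, which gives $O\bigl(N(m+\log N)\bigr) = O(mN\log N)$, as claimed. I expect the main obstacle to be identifying the product $|U||V|$ as the correct measure and combining the weighted-median guarantee with AM--GM to extract the factor-four decrease along each branch; once that geometric decrease is established, the depth bound and the charging of work to elements follow routinely.
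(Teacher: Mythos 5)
Your proof is correct, and its engine is the same one the paper uses: the weighted-median guarantee, read off as $|U^+|/|U| + |V^+|/|V| \le 1$ (and likewise for the minus parts), combined with AM--GM to get $|U^+|\cdot|V^+| \le \tfrac14 |U|\,|V|$ and $|U^-|\cdot|V^-| \le \tfrac14 |U|\,|V|$. Where you genuinely diverge is in the accounting. The paper folds the dimension into a single potential $|U|\cdot|V|\cdot 2^m$, bounds the \emph{combined} measure of the three children by $f(s,t)\cdot|U|\,|V|\,2^m$ with $f(s,t)=\tfrac14\{(1-s)^2+(1-t)^2+\tfrac12(s+t)^2\}$, and verifies $f(s,t)\le\tfrac12$ by a small calculus argument (partial derivatives, maxima at the corners $(0,0)$ and $(1,1)$); the factor-$2$ decrease of the total measure per level then bounds the recursion depth by $\log_2(N^2 2^m)$, with $O(N)$ work per level because the subproblems at one level are element-disjoint. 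You instead keep the bare product $|U|\,|V|$ as the measure and treat the two branch types separately: same-index branches shrink the product by a factor of at least $4$, while advance branches never increase it and can occur at most $m$ times, so every root-to-leaf path has length $O(m+\log N)$, and you charge the per-node linear work to the elements descending through that node. Your route buys two things: it avoids the $f(s,t)$ optimization entirely (the equality branch is handled by counting, not by the $2^m$ weight), and it makes explicit the sharper bound $O\bigl(N(m+\log N)\bigr)$, which the paper's argument also implies but states only as $O(mN\log N)$. What the paper's potential-function formulation buys is a single global invariant in the measure-and-conquer style---the total search space over all simultaneous subproblems halves at every level---which also directly controls the total number of nodes in the recursion tree. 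Both arguments are sound; yours is arguably the cleaner of the two. (Your correctness discussion is also slightly more careful than the paper's, which only notes that equal vectors survive into exactly one of the three subproblems; one small imprecision is that a non-matching pair need not be separated at the exact moment the first differing coordinate is examined---it may ride along through several same-index branches---but it can never pass the equality branch at that coordinate, which is all that soundness requires.)
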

\begin{proof}
In Algorithm~\ref{recursive_algo}, we find the weighted median $k$ of the $i$-th coordinates of $U \cup V$ where all the elements in $U$ and $V$ have weight $|V|$ and $|U|$, respectively.
Then, we partition each of $U$ and $V$ into three sets, respectively.

Two vectors $u \in U$ and $v \in V$ can be equal in at most one of the following three cases:
\begin{center}
(1)~ $u \in U^+$ and $v \in V^+$,\\
(2)~ $u \in U^=$ and $v \in V^=$,\\
(3)~ $u \in U^-$ and $v \in V^-$.\\
\end{center}
We solve smaller subproblems of the vector equality problem for the three cases. 
In particular, we decrease the dimension $m$ to $m-1$ in the case of (2). 

The rule of the partition immediately gives the following equation:
\begin{align*}
|V| \cdot (|U^+| + |U^=| + |U^-|) + |U| \cdot (|V^+| + |V^=| + |V^-|) 
=  |V| \cdot |U| + |U| \cdot |V|.
\end{align*}
Dividing it by $|U| \cdot |V|$, we have
\[
\frac{|U^+| + |U^=| + |U^-|}{|U|} + \frac{|V^+| + |V^=| + |V^-|}{|V|} = 2.
\]
For some constants $s$ and $t$ such that $0 \leq s \leq 1$ and $0 \leq t \leq 1$, we have
\begin{align*}
&\frac{|U^+|}{|U|} + \frac{|V^+|}{|V|} = 1 - s,\\
&\frac{|U^-|}{|U|} + \frac{|V^-|}{|V|} = 1 - t,\\
&\frac{|U^=|}{|U|} + \frac{|V^=|}{|V|} = s + t
\end{align*}
because we partitioned $U$ and $V$ at the weighted median.
Since $\alpha + \beta \geq 2\sqrt{\alpha \beta} $ for any $\alpha,\beta \geq 0$, we have
\begin{align*}
&\frac{|U^+|}{|U|} \cdot \frac{|V^+|}{|V|} \leq \frac{1}{4} \cdot (1 - s )^2,\\
&\frac{|U^-|}{|U|} \cdot \frac{|V^-|}{|V|} \leq \frac{1}{4} \cdot (1 - t )^2,\\
&\frac{|U^=|}{|U|} \cdot \frac{|V^=|}{|V|} \leq \frac{1}{4} \cdot (s + t )^2.
\end{align*}
Collecting these inequalities, we have
\begin{align*}
&|U^+| \cdot |V^+| \cdot 2^m + |U^-| \cdot |V^-| \cdot 2^m + |U^=| \cdot |V^=| \cdot 2^{m-1} \\
\leq &\frac{1}{4} \cdot \{ (1 - s )^2 \cdot 2^m +  (1 - t )^2 \cdot 2^m + (s + t )^2 \cdot 2^{m-1} \} \cdot |U| \cdot |V|\\
= &\frac{1}{4} \cdot \{ (1 - s )^2 +  (1 - t )^2  + \frac{1}{2} \cdot (s + t )^2 \} \cdot |U| \cdot |V| \cdot 2^m.
\end{align*}
It means that the search space $|U| \cdot |V| \cdot 2^m$ decreases by the factor of
\begin{align*}
f (s,t) &= \frac{1}{4} \cdot \{  (1 - s )^2 +  (1 - t )^2 + \frac{1}{2} (s + t )^2 \}\\
&= 0.5 - 0.5 s - 0.5 t + 0.375 {s}^2 + 0.375 {t}^2 + 0.25 s t
\end{align*}
at each recursion.

We can conclude $f (s,t) \leq \frac{1}{2}$ in the domain of $0 \leq s \leq 1$ and $0 \leq t \leq 1$ by the following argument.
By taking the partial derivatives, we have
\begin{align*}
\frac{\partial f (s,t)}{\partial s} = - 0.5 + 0.75 s + 0.25 t,\\
\frac{\partial f (s,t)}{\partial t} = - 0.5 + 0.25 s + 0.75 t.
\end{align*}
If $\frac{\partial f (s,t)}{\partial s} > 0$ (equivalently, $t > 2 - 3 s$), then the function $f(s,t)$ is monotonically increasing in the direction of $s$.
If $\frac{\partial f (s,t)}{\partial s} < 0$ (equivalently, $t < 2 - 3 s$), then the function $f(s,t)$ is monotonically decreasing in the direction of $s$.
The same thing applies for $t$ instead of $s$.

Therefore, we can verify that it is maximized at two edges $(s,t) = (0,0), (1,1)$ as $f (s,t) = 0.5$ 
and minimized at the middle point $(s,t) = (0.5, 0.5)$ as $f (s,t) = 0.25$.
Moreover, maximal points except the two edges are only two points $(s,t) = (0,1), (1,0)$ 
as $f (s,t) = 0.375$.

The recursions occur at most $\log_2 (|U| \cdot |V|  \cdot 2^m) \in O(m\log N)$ depth.
At each depth $d$ of the recursion, we need to solve at most $3^d$ ($< N$) subproblems of 
the vector equality problem, but the total number of elements is at most $2N$.
Therefore, we can solve the weighted median in linear time $O(|U|+|V|) = O(N)$ as a whole
at each depth of the recursion.

As a consequence, we conclude that the total time complexity of Algorithm~\ref{recursive_algo} is $O(m N \log N)$. 
\qed
\end{proof}

\section{Exact Algorithms for 0-1 Integer Programs}
\label{IP_section}
In this section, we give an exact algorithm for solving the feasibility and optimization problem of 0-1 integer programs with linear equality constraints
by reducing it to the vector equality problem described in the previous section.

\begin{theorem}
\label{feasibility_thm}
The feasibility and optimization problem of 0-1 integer programs with linear equalities (Problem~\ref{feasibility} and Problem~\ref{optimization}) can be computed in $O(m \cdot 2^{n/2} \mathrm{poly}(n))$-time.
\end{theorem}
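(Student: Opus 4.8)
The plan is to reduce both problems to the vector equality problem via the 2-table (meet-in-the-middle) method and then invoke Lemma~\ref{vector_lemma1} (or, equivalently, Lemma~\ref{vector_lemma2}). First I would split the $n$ variables into two halves $x^{(1)}, x^{(2)} \in \{0,1\}^{n/2}$, writing $A = [A^{(1)} \mid A^{(2)}]$ accordingly. For every assignment $x^{(1)}$ I would enumerate the $N = 2^{n/2}$ partial vectors $A^{(1)} x^{(1)}$ and collect them as the set $U$; symmetrically I would collect $V = \{ b - A^{(2)} x^{(2)} \}$. Since $Ax=b$ holds if and only if $A^{(1)} x^{(1)} = b - A^{(2)} x^{(2)}$, a full assignment is feasible exactly when its two halves correspond to an equal pair $u=v$ with $u\in U$ and $v\in V$. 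Applying the vector equality algorithm to $(U,V)$ with $N = 2^{n/2}$ then costs $O(mN\log N) = O(m\cdot 2^{n/2}\,(n/2)) = O(m\cdot 2^{n/2}\,\mathrm{poly}(n))$ time, which settles the feasibility problem: the instance is feasible if and only if the returned list is nonempty.

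For the optimization problem I would exploit the additive separability of the objective $c^T x = c^{(1)T} x^{(1)} + c^{(2)T} x^{(2)}$. When building the tables I would additionally store, alongside each vector in $U$ and $V$, the corresponding partial objective value $c^{(1)T} x^{(1)}$ and $c^{(2)T} x^{(2)}$. The vector equality algorithm returns the feasible pairs grouped as a list of subset pairs $(U',V')$, where every vector in $U'$ equals every vector in $V'$ and thus all share one common value $w$. Because the objective decomposes, the cheapest feasible solution whose shared value is $w$ has cost $\min_{u\in U'} c^{(1)T} x^{(1)}_u + \min_{v\in V'} c^{(2)T} x^{(2)}_v$, so I would scan each returned subset once to extract the two per-side minima, record their sum, and finally take the minimum over all subset pairs.

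The main obstacle is ensuring that this post-processing introduces no extra exponential factor. The key observation is that the subsets returned by the algorithm are pairwise disjoint: in Algorithm~\ref{sort_algo} they are contiguous index ranges of the sorted arrays, and in Algorithm~\ref{recursive_algo} they are the disjoint leaves of the recursion. Hence the total size of all returned subsets is at most $2N$, so a single linear scan over each subset to find the per-side minima costs $O(N\,\mathrm{poly}(n))$ in aggregate, which is dominated by the $O(m\cdot 2^{n/2}\,\mathrm{poly}(n))$ cost of the vector equality step. Since computing each partial vector and each partial objective value takes $\mathrm{poly}(n)$ time per assignment, building the tables also stays within the claimed bound, completing the proof.
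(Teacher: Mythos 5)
Your proposal is correct and follows essentially the same approach as the paper's own proof: the same 2-table split with $U=\{A^{(1)}x^{(1)}\}$ and $V=\{b-A^{(2)}x^{(2)}\}$, reduction to the vector equality problem via Lemma~\ref{vector_lemma1}, and the same treatment of the objective by storing separable partial costs and taking per-side minima over each returned subset pair. Your explicit observation that the returned subsets are pairwise disjoint (so the post-processing scan costs only $O(N\,\mathrm{poly}(n))$ in aggregate) is in fact slightly more careful than the paper's wording of the same point.
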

\begin{proof}
We solve the feasibility problem of $Ax=b$ by reducing it to the vector equality problem.
First, we partition the set of variables $X = \{x_1, \cdots, x_n\}$ into two disjoint subsets $X_1$ and $X_2$.
Here, we assume the number of variables $n$ is even without loss of generality.

Let $\varphi(x_j)$ be assignments of $x_j$.
Then, we define vectors $u$ and $v$ by
\begin{align*}
&u_i = \sum_{x_j \in X_1} A_{ij} \cdot \varphi(x_j),
&v_i = b_i - \sum_{x_j \in X_2} A_{ij} \cdot \varphi(x_j).
\end{align*}
for each assignment of $X_1$ and $X_2$. Let $U$ and $V$ be two sets of $2^{n/2}$ such vectors $u$ and $v$, respectively.

Taking into account the linearity of the objective function $c^T x$ of Problem~\ref{optimization}, we can extend the algorithm 
for the feasibility problem to one for the optimization problem.
For this purpose, we additionally calculate weight
\begin{align*}
&w(u) =  \sum_{x_j \in X_1} c_j \cdot \varphi(x_j),
&w(v) =  \sum_{x_j \in X_2} c_j \cdot \varphi(x_j)
\end{align*}
for each of $u \in U$ and $v \in V$, respectively.

From the construction of $U$ and $V$, there is a 0,1-vector $x \in \{0,1\}^n$ satisfying $Ax = b$ if and only if 
there is a pair of two vectors $u \in U$ and $v \in V$ satisfying $u_i = v_i$ for all $i$ ($1 \leq i \leq m$).

We can solve the vector equality problem for $U$ and $V$ in $O(m N \log N)$-time by using algorithms in Section~\ref{vector_section}.
After the algorithms terminates, we can get a list of submatrices which contains information of all the possible solutions.
\[
(U^1,V^1), (U^2,V^2), \cdots, (U^k,V^k), \cdots , (U^l,V^l)
\]
There are at most $N = 2^{n/2}$ submatrices.
From the construction of the algorithms, each row and column of submatrices has no intersection.

Let $U^k \times V^k$ ($U^k \subseteq U$ and $V^k \subseteq V$) be one of such submatrices.
Then we would like to solve the following optimization problem for each $k$.
\begin{eqnarray*}
\begin{array}{ll}
\min & w(u) + w(v) \\
s.t. & u \in U^k \mbox{ and } v \in V^k. 
\end{array}
\end{eqnarray*}
From the linearity of $c^T$, $w(u)$ and $w(v)$ are independent.
Therefore , the above minimization problem is solvable separately for $u$ and $v$.
Hence, $O(|U^k| + |V^k|)$-time is sufficient to optimize.

We solve the same problem for each submatrices and take the minimum of all the problems.
The total time complexity is $O(m \cdot 2^{n/2} \mathrm{poly}(n))$.
\qed
\end{proof}

\section{Improved Space Complexity}
\label{space_section}
Shroeppel and Shamir~\cite{SS81} studied the $k$-table method, which is a generalization of the 2-table method, 
and showed an $O(2^{n/4})$-space exact algorithm for the subset sum problem (a special case of the Problem~\ref{feasibility}) by using the 4-table method.
Following the idea of Shroeppel and Shamir~\cite{SS81} using the priority queue, 
we can reduce the space complexity of our algorithms from $O(2^{n/2})$ to $O(2^{n/4})$ as in the following theorem.

\begin{theorem}
The feasibility and optimization problem of 0-1 integer programs with linear equality constraints
(Problem~\ref{feasibility} and Problem~\ref{optimization}) can be computed
in $O(m \cdot 2^{n/2} \mathrm{poly}(n))$-time and $O(m \cdot 2^{n/4} \mathrm{poly}(n))$-space.
\end{theorem}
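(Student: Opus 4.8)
The plan is to adapt the Schroeppel--Shamir $k$-table method to the vector equality reduction of Theorem~\ref{feasibility_thm}, replacing the explicit storage of the sets $U$ and $V$ by priority queues that emit their elements in sorted lexicographical order. First I would split the $n$ variables into four blocks of $n/4$ variables, grouping the first two blocks into the index set $X_1$ and the last two into $X_2$ exactly as in the proof of Theorem~\ref{feasibility_thm}. Then each $u \in U$ decomposes as $u = u^{(1)} + u^{(2)}$, where $u^{(1)}$ ranges over the $2^{n/4}$ partial vectors of the first block and $u^{(2)}$ over those of the second, and symmetrically $v = v^{(3)} + v^{(4)}$; so the elements of $U$ and $V$ are sums over a $2^{n/4} \times 2^{n/4}$ grid of partial vectors that we never materialize in full.

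The key observation making the method valid under lexicographical order is that this order is preserved by vector addition: if $u^{(2)} < u^{(2)\prime}$ then $u^{(1)} + u^{(2)} < u^{(1)} + u^{(2)\prime}$ for every fixed $u^{(1)}$, since the first coordinate at which the two partial vectors differ is unchanged by adding the common vector $u^{(1)}$. Consequently, for each fixed $u^{(1)}$ the sums $u^{(1)} + u^{(2)}$ inherit the sorted order of the $u^{(2)}$, and I can enumerate all $2^{n/2}$ vectors of $U$ in increasing lexicographical order using the standard heap trick: presort the $2^{n/4}$ partial vectors $u^{(2)}$, initialise a min-heap ordered lexicographically with one entry per $u^{(1)}$ pointing at the smallest $u^{(2)}$, and repeatedly extract the minimum and advance its pointer. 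This streams $U$ in sorted order while keeping only $O(2^{n/4})$ vectors, hence $O(m \cdot 2^{n/4})$ numbers, resident at any time, and the identical construction streams $V$.

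With both streams available in sorted order I would run the two-pointer merge of Algorithm~\ref{sort_algo}, advancing whichever front is lexicographically behind and detecting a maximal block of equal vectors whenever the two fronts coincide; this settles the feasibility problem. For the optimization problem I would carry the weight $w(u^{(1)}) + w(u^{(2)})$ alongside each heap entry and, as each equal block is streamed through, maintain the running minimum weight on the $U$ side and on the $V$ side separately, updating the global optimum by the sum of these two minima. This exploits the same independence of $w(u)$ and $w(v)$ used in Theorem~\ref{feasibility_thm} and needs only $O(\mathrm{poly}(n))$ auxiliary storage per block. For the cost: there are $O(2^{n/2})$ extractions per side, each a heap operation of depth $O(\log 2^{n/4}) = O(n)$ with $O(m)$-time lexicographical comparisons, giving $O(m \cdot 2^{n/2}\,\mathrm{poly}(n))$ time and $O(m \cdot 2^{n/4}\,\mathrm{poly}(n))$ space.

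The main obstacle I anticipate is not the complexity bookkeeping but verifying that the sorted-enumeration trick remains valid for a lexicographical key rather than for a single numeric key; once the order-preservation-under-addition observation above is in hand, the rest follows the subset-sum template. A secondary point to state honestly is that, unlike the $O(2^{n/2})$-space algorithm, the streaming version cannot retain a compact representation of \emph{all} feasible solutions, so the statement should be read as computing feasibility and the optimal objective value together with one attaining assignment, which is exactly what Problem~\ref{feasibility} and Problem~\ref{optimization} ask for.
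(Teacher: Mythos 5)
Your proposal is correct and takes essentially the same approach as the paper: the paper likewise splits the variables into four blocks of $n/4$, uses priority queues over the $2^{n/4}$-element tables to stream the two sets of $2^{n/2}$ partial sums in sorted lexicographical order (Algorithm~\ref{pq_algo}), and merges the two sorted streams while tracking, within each block of equal vectors, the minimum weight on each side separately. The only substantive difference is presentational: you state and justify the translation-invariance of the lexicographical order (the fact that makes the heap-streaming trick sound for vector keys), a point the paper's proof leaves implicit.
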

\begin{proof}
We partition the set of variables $X = \{x_1, \cdots, x_n\}$ into four disjoint subsets $X_1$, $X_2$, $X_3$ and $X_4$.
Here, we assume the number of variables $n$ can be divided by 4 without loss of generality.

Let $\varphi(x_j)$ be an assignment of $x_j$.
Then, we define vectors $u$, $v$, $s$ and $t$ by
\begin{align*}
&u_i = \sum_{x_j \in X_1} A_{ij} \cdot \varphi(x_j),
&v_i = \sum_{x_j \in X_2} A_{ij} \cdot \varphi(x_j),\\
&s_i = - \sum_{x_j \in X_3} A_{ij} \cdot \varphi(x_j),
&t_i = b_i - \sum_{x_j \in X_4} A_{ij} \cdot \varphi(x_j).
\end{align*}
for each assignment of $X_1$, $X_2$, $X_3$ and $X_4$. 
Let $U$, $V$, $S$ and $T$ be four sets of $2^{n/4}$ such vectors $u$, $v$, $s$ and $t$, respectively.

We additionally calculate weight
\begin{align*}
&w(u) =  \sum_{x_j \in X_1} c_j \cdot \varphi(x_j),
&w(v) =  \sum_{x_j \in X_2} c_j \cdot \varphi(x_j),\\
&w(s) =  \sum_{x_j \in X_3} c_j \cdot \varphi(x_j),
&w(t) =  \sum_{x_j \in X_4} c_j \cdot \varphi(x_j).
\end{align*}
for each of $u \in U$, $v \in V$, $s \in S$ and $t \in T$, respectively.
For each vector, we can store data of its corresponding assignment and weight within $O(n)$-space.

From the construction of the four sets, there is a 0,1-vector $x \in \{0,1\}^n$ satisfying $Ax = b$ if and only if 
a quartet of four vectors $u \in U$, $v \in V$, $s \in S$ and $t \in T$ such that $u + v = s + t$.
We can search such quartets by Algorithm~\ref{pq_algo} in 
$O(m N^2 \log N)$-time and $O(m N)$-space where $|U| = |V| = |S| = |T| = N$.
In the algorithm, we use priority queues in which we can push and pop any element 
in the logarithmic time to the number of elements.

We can compute the minimum objective value and the corresponding assignment of the original 0-1 integer programs by Algorithm~\ref{pq_algo}
where the inputs are given as four sets of $m$-dimensional vectors with their assignments $\varphi$ and weights $w$.
The return value of $\infty$ means that the problem is infeasible.
If it is feasible, we can retrieve the corresponding assignment $\varphi$ of variables from the values of $\mathbf{SOL}$ in Algorithm~\ref{pq_algo}.
\qed
\end{proof}

\begin{corollary}
The feasibility and optimization problems of 0-1 integer programs with linear equality constraints (Problem~\ref{feasibility} and Problem~\ref{optimization}) 
can be computed in $O(1.415^n)$-time and $O(1.190^n)$-space.
\end{corollary}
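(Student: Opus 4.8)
The plan is to obtain the Corollary as an immediate numerical consequence of the preceding Theorem, whose bounds are $O(m \cdot 2^{n/2} \mathrm{poly}(n))$-time and $O(m \cdot 2^{n/4} \mathrm{poly}(n))$-space. First I would recall the standing assumption from Section~2 that $m \in O(\mathrm{poly}(n))$, which is justified there because otherwise the input size would be super-polynomial in $n$. Under this assumption the factor $m \cdot \mathrm{poly}(n)$ is itself a polynomial in $n$, so the two bounds collapse to the cleaner forms $p(n) \cdot 2^{n/2}$ and $q(n) \cdot 2^{n/4}$ for suitable polynomials $p$ and $q$.

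The second step is to record the two elementary numerical facts on which everything rests: $2^{1/2} = \sqrt{2} = 1.41421\ldots < 1.415$ and $2^{1/4} = 1.18920\ldots < 1.190$. Consequently $2^{n/2} \in O(1.415^n)$ and $2^{n/4} \in O(1.190^n)$ even before absorbing the polynomial overhead. Crucially, both inequalities are \emph{strict}, so the ratios $1.415/\sqrt{2} > 1$ and $1.190/2^{1/4} > 1$ exceed one by a fixed positive margin.

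The third step is the absorption argument. For any polynomial $p$ and any base $b$ with $b > \sqrt{2}$ we have
\[
p(n) \cdot 2^{n/2} = \left( p(n) \cdot \left( \tfrac{\sqrt{2}}{b} \right)^{n} \right) \cdot b^{n},
\]
and since $\sqrt{2}/b < 1$ the bracketed prefactor tends to $0$, hence is bounded by a constant; taking $b = 1.415$ yields $p(n) \cdot 2^{n/2} \in O(1.415^n)$. The identical computation with $b = 1.190 > 2^{1/4}$ gives $q(n) \cdot 2^{n/4} \in O(1.190^n)$ for the space bound. Combining these with the Theorem establishes both claimed bounds simultaneously.

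I do not expect a genuine obstacle here, since the statement is a restatement of the Theorem with the bases made explicit and the polynomial factor suppressed; the only point requiring care is to verify that the chosen bases $1.415$ and $1.190$ strictly dominate $\sqrt{2}$ and $2^{1/4}$ respectively, as it is exactly this strictness that allows a growing exponential to swamp any fixed polynomial and thereby permits the $m \cdot \mathrm{poly}(n)$ overhead to be hidden inside the stated constants.
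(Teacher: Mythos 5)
Your proposal is correct and matches the paper's (implicit) reasoning exactly: the Corollary is stated without proof as an immediate consequence of the preceding Theorem, using the standing assumption $m \in O(\mathrm{poly}(n))$ and the strict inequalities $2^{1/2} < 1.415$, $2^{1/4} < 1.190$ to absorb the polynomial factors. Your explicit absorption argument is precisely the step the paper leaves to the reader.
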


\begin{algorithm}
\caption{VectorSumEquality$(U,V,S,T)$}
\begin{algorithmic}
\STATE Sort $U$, $V$ $S$ and $T$ in the ascending lexicographical order, respectively. \\
($u^k$, $v^k$, $s^k$, and $t^k$ denote the $k$-th vectors in $U$, $V$ $S$ and $T$, respectively)
\STATE Set $\mathbf{MIN} := \infty$ and initialize two priority queues $Q_1$ and $Q_2$ as empty sets.
\FOR{$k$ = 1 \TO $|V|$}
\STATE Push $(u^k, v^1)$ to the priority queue $Q_1$.
\ENDFOR
\FOR{$k$ = 1 \TO $|T|$}
\STATE Push $(s^k, t^1)$ to the priority queue $Q_2$.
\ENDFOR
\WHILE{Both of $Q_1$ and $Q_2$ are not empty}
\STATE Take the top elements $(u^\alpha,v^\beta)$ and $(s^\gamma,t^\delta)$ from $Q_1$ and $Q_2$, respectively.
\IF{$u^\alpha + v^\beta < s^\gamma + t^\delta$}
\STATE Pop $(u^\alpha,v^\beta)$. 
\IF{$\beta + 1 \leq |V|$}
\STATE Push  $(u^\alpha,v^{\beta+1})$.
\ENDIF
\ELSIF{$u^\alpha + v^\beta > s^\gamma + t^\delta$}
\STATE Pop $(s^\gamma,t^\delta)$. 
\IF{$\delta + 1 \leq |T|$}
\STATE Push $(s^\gamma,t^{\delta+1})$.
\ENDIF
\ELSE
\STATE $w := u^\alpha + v^\beta (:= s^\gamma + t^\delta)$;
\STATE $\mathbf{MIN_1} := \infty$;~ $\mathbf{MIN_2} := \infty$;
\WHILE{$Q_1 \neq \emptyset$ and $w = u^{\alpha'} + v^{\beta'}$ where $(u^{\alpha'},v^{\beta'})$ is the top element of $Q_1$}
\STATE Pop $(u^{\alpha'},v^{\beta'})$.
\IF{$\beta' + 1 \leq |V|$}
\STATE Push $(u^{\alpha'},v^{\beta'+1})$.
\ENDIF
\IF{$\mathbf{MIN_1} > w(u^{\alpha'}) + w(v^{\beta'})$}
\STATE $\mathbf{MIN_1} := w(u^{\alpha'}) + w(v^{\beta'})$;~ $\mathbf{SOL_1} := (\alpha',\beta')$;
\ENDIF
\ENDWHILE
\WHILE{$Q_2 \neq \emptyset$ and $w = s^{\gamma'} + t^{\delta'}$ where $(s^{\gamma'},t^{\delta'})$ is the top element of $Q_2$}
\STATE Pop $(s^{\gamma'},t^{\delta'})$. 
\IF{$\delta' + 1 \leq |V|$}
\STATE Push $(s^{\gamma'},t^{\delta'+1})$.
\ENDIF
\IF{$\mathbf{MIN_2} > w(s^{\gamma'})+w(t^{\delta'})$}
\STATE $\mathbf{MIN_2} := w(s^{\gamma'})+w(t^{\delta'})$;~ $\mathbf{SOL_2} := (\gamma',\delta')$;
\ENDIF
\ENDWHILE
\IF{$\mathbf{MIN} > \mathbf{MIN_1} + \mathbf{MIN_2}$}
\STATE $\mathbf{MIN} := \mathbf{MIN_1} + \mathbf{MIN_1}$;~ $\mathbf{SOL} := (\mathbf{SOL_1}, \mathbf{SOL_2})$;
\ENDIF
\ENDIF
\ENDWHILE
\STATE Return $\mathbf{MIN}$
\end{algorithmic}
\label{pq_algo}
\end{algorithm}

\section{Conclusions}
In this paper, we have presented $O(1.415^n)$-time and $O(1.190^n)$-space exact algorithms for 0-1 integer programs with linear equality constraints.
We can apply our algorithms to the optimization problem as well as the feasibility problem.
We can also extend our algorithms to integer programs where their variables are constrained by any finite set of integers.
There are several recent progress on the subset sum problem such as time-space tradeoff results~\cite{AKKM13} and 
improved algorithms for a certain important class of the subset sum problem~\cite{BCJ11,HG10}.
It would be interesting to investigate in these directions with connection to our results concerned with 0-1 integer programs.

Our computational experiments show that our algorithms can solve 0-1 integer programs with around 60 variables which are generated in a random way. 
Some of well-known IP solvers cannot solve these instances because they do not have any favorable structures to cut down the search space.
By connecting our algorithms to existing techniques for 0-1 integer programs (e.g., the branch-and-bound method), 
we hope that our algorithms will be useful from the practical point of view as well as theoretical analysis.

\end{document}